\documentclass[10pt,conference]{IEEEtran}
\IEEEoverridecommandlockouts

\usepackage{amssymb}
\usepackage[cmex10]{amsmath}
\usepackage{stfloats}
\usepackage{graphicx}
\usepackage{subfigure}
\usepackage{tabularx}
\usepackage{epsfig,epsf,color,balance,cite}
\usepackage{verbatim}
\usepackage{url}
\usepackage{bm}
\usepackage{booktabs}
\usepackage[ruled,linesnumbered]{algorithm2e}
\usepackage{algorithmic}

\newtheorem{lemma}{\bf Lemma}
\newtheorem{proof}{Proof}
\newtheorem{proposition}{\bf Proposition}
\usepackage{geometry}
\geometry{a4paper, scale=0.87, top = 1.95cm, bottom = 4.42cm}
\hyphenation{net-works}

\usepackage{color}
\definecolor{myc1}{rgb}{0,0,0}

\columnsep 0.201 in

\begin{document}

\title{ 
Fluid Antenna Relay (FAR)-assisted Communication with Hybrid Relaying Scheme Selection
}

\author{
\IEEEauthorblockN{
Ruopeng Xu$\IEEEauthorrefmark{1}$,
Songling Zhang$\IEEEauthorrefmark{1}$,
Zhaohui Yang$\IEEEauthorrefmark{1}$,
Mingzhe Chen$\IEEEauthorrefmark{2}$,
Zhaoyang Zhang$\IEEEauthorrefmark{1}$,
Kai-Kit Wong$\IEEEauthorrefmark{3}$
}
	\IEEEauthorblockA{
			$\IEEEauthorrefmark{1}$College of Information Science and Electronic Engineering, Zhejiang University, Hangzhou, China\\
            $\IEEEauthorrefmark{2}$Department of Electrical and Computer Engineering, University of Miami \\
            $\IEEEauthorrefmark{3}$Department of Electronic and Electrical Engineering, University College London, U.K.\\
          	E-mails:
(ruopengxu, sl-zhang, yang\_zhaohui, ning\_ming)@zju.edu.cn, mingzhe.chen@miami.edu, kai-kit.wong@ucl.ac.uk
		}
\vspace{-3em}
}
\maketitle

\maketitle

\begin{abstract}
In this paper, we investigate a fluid antenna relay (FAR)-assisted communication system with hybrid relaying scheme selection. By leveraging statistical channel state information (CSI) and distribution characteristics of fluid antenna system (FAS), we approximate the outage probability (OP) with different relaying schemes utilizing a Gaussian copula-based method. Each relay node follows the OP-minimized principle to choose the forwarding schemes. To reduce self-interference and avoid multi-user interference, half-duplex relays and frequency division multiple access schemes are considered, respectively. On this basis, we formulate a sum-rate maximization problem to mitigate the rate loss introduced by the half-duplex mode. To solve this problem, we first transform the original nonconvex problem into a power control optimization problem by obtaining the closed form of bandwidth allocation and substituting it into the original problem. Then, we solve the power control optimization problem with a low complexity method. Simulation results verify the effectiveness of our proposed algorithm to improve the sum rate of the system.
\end{abstract}

\begin{IEEEkeywords}
Fluid antenna system (FAS), fluid antenna relay (FAR), hybrid relaying schemes selection, outage probability
\end{IEEEkeywords}
\IEEEpeerreviewmaketitle

\section{Introduction}\label{Introduction}
The rapid evolution towards the sixth-generation (6G) wireless communication systems presents unprecedented challenges and demands, particularly regarding network coverage, capacity in high-frequency bands, and energy efficiency across the communication system\cite{8869705}. Given that wireless communications in 6G will extensively utilize high-frequency spectrum, severe signal attenuation becomes critical hurdles that traditional direct transmission cannot overcome, making cooperative relaying an critical technology for signal transmissions. To maximize the performance gain of these essential relay networks, we introduce the fluid antenna system (FAS) proposed in\cite{9264694}, which is capable of dynamically selecting the optimal spatial location to receive or transmit a signal, and have proposed the concept of fluid antenna relay (FAR) in \cite{10615841}.

Several works have also investigated the combination of the FAS in relay networks, improving the performance of the communication systems\cite{10167904,aka2025power,11216397}. However, existing works primarily consider the single relaying scheme, i.e., using either amplify-and-forward (AF) or decode-and-forward (DF) for relaying, neglecting the need for different relaying schemes in different communication conditions. Therefore, we propose a hybrid relaying scheme selection strategy, which combines the statistical channel state information (CSI) with the distribution characteristics of the FAS channels to select the appropriate relay forwarding scheme at the relay nodes. On the other hand, a full-duplex relaying system is susceptible to interference originating from self-interference, other relays, or other users' signals, and a half-duplex (HD) relaying system can eliminate self-interference though, it reduces the transmission rate of the system. To address this issue, we consider a HD relaying system and utilize orthogonal multiple access (OMA) for interference cancellation, while formulating an optimization problem to maximize the sum rate of the system.

The key contributions of this paper include:
\begin{itemize}
    \item We investigate an FAR-assisted multi-user uplink communication system, where FARs integrate the FAS and relay, introducing an extra degree of freedom and improving the performance of the system. Each FAR utilizes hybrid relaying scheme selection strategy that follows the outage probability (OP)-minimized principle. This strategy leverages the statistical CSI and the distribution characteristics of the FAS channels to select between the AF and DF schemes at the relay node to minimize the OP of the whole transmission.
    \item We formulate a sum-rate maximization problem for the proposed system, jointly optimizing the bandwidth allocation and transmit power at users and FAR nodes. To solve this optimization problem, we first obtain the closed-form of the optimal bandwidth allocation. Then, by substituting the optimal solution of the bandwidth allocation into the original problem, we equivalently transform it into transmit power optimization problem, which can be solved by a low complexity algorithm.
    \item We introduce a Gaussian copula-based method to approximate the OP value when using AF or DF under different transmit powers and outage thresholds. Besides, simulation results are provided to demonstrate the impact of the number of users, the number of ports and the maximum transmit power of FARs, validating that our proposed scheme outperforms all the benchmark schemes.
\end{itemize}

\section{System Model}

\begin{figure}[t]
\centering
\includegraphics[width=1\linewidth]{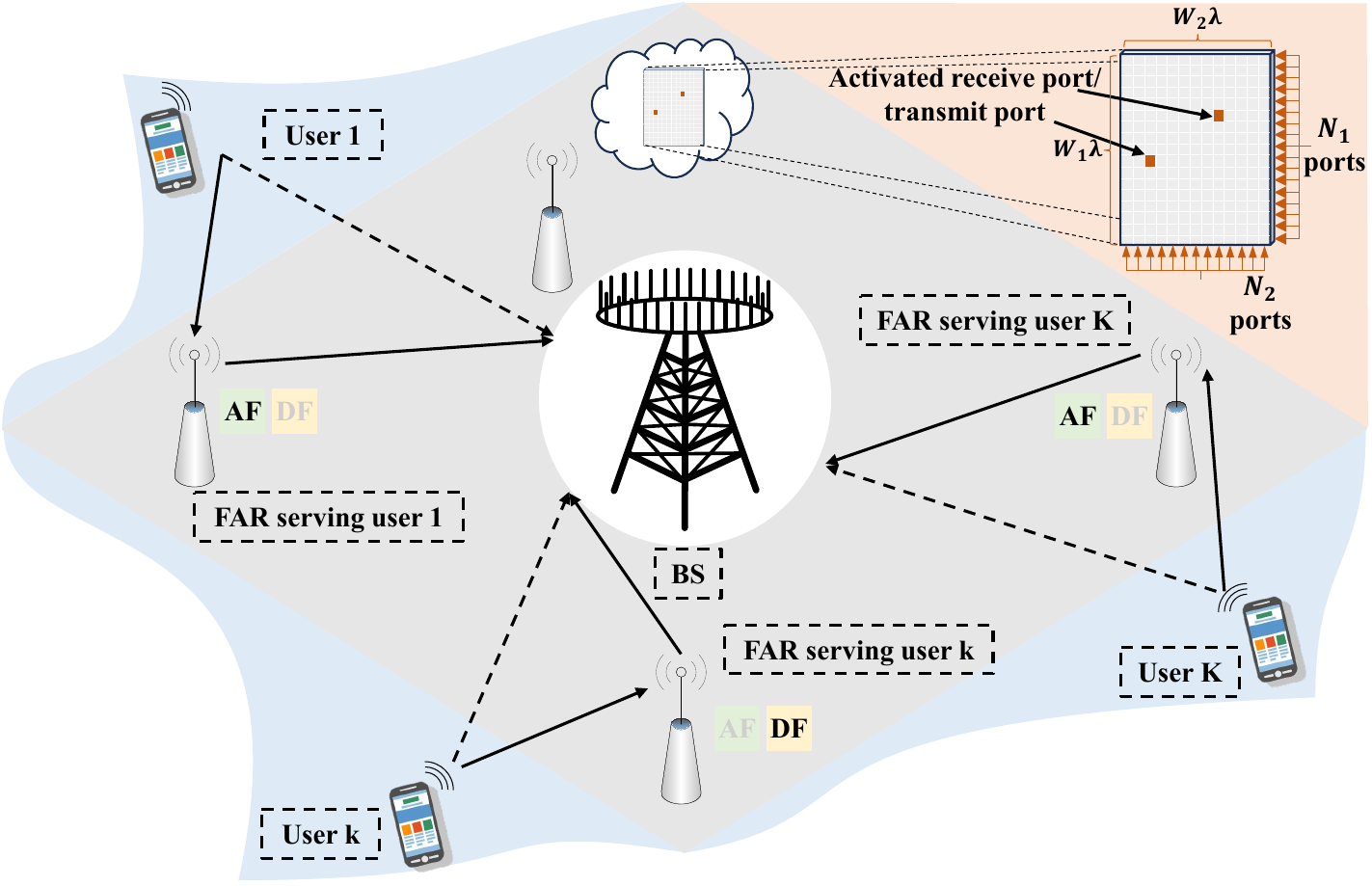}
\caption{System model of the proposed communication system.} 
\label{SystemModel}
\vspace{-0.6cm}
\end{figure}

As illustrated in Fig.~\ref{SystemModel}, we consider an FAR-assisted uplink communication system, which consists of $K$ users, $K$ FARs, and one base station (BS), where the users and BS are equipped with traditional antenna system (TAS) including a single position-fixed antenna, and each FAR deploys a single two-dimensional (2D) fluid antenna (FA). We assume that all users are far away from the BS such that the line-of-sight (LoS) links between the users and the BS are weak. To maintain the quality of the communication, at the first time slot, each user broadcasts its message to both the BS and a nearby FAR. Then, at the second time slot, the FAR transmits the processed signal to the BS using either the AF or DF scheme. By utilizing the maximal ratio combining (MRC) method, the BS combines the received signals from user $k$ and the corresponding FAR (the $k$-th FAR) serving user $k$. In particular, we consider HD relays to reduce the self-interference and frequency division multiples access (FDMA) employed in the BS to avoid multi-user interference and better perform the MRC method.
\subsection{Fluid Antenna Relay Model}
It is assumed that the 2D FA includes only one radio frequency (RF) chain and $N=N_1\times N_2$ preset ports, where the $N_i$ ports are uniformly distributed along a linear space of length $W_i\lambda$ for $i\in\{1,2\}$, taking up a grid surface size of $W=W_1\times W_2\lambda^2$ with $\lambda$ being the wave length. For simplicity of notation, a mapping function $\mathcal{F}:\mathbb{R}^2 \rightarrow \mathbb{R}$ is introduced, e.g., for the $(n_1,n_2)$-th port, we can note it as the $l$-th port,
\begin{equation}
    l=\mathcal{F}(n_1,n_2)=(n_1-1)N_2+ n_2.
\end{equation}
Based on this, for the $k$-th FAR, we denote the channel gain at the $l$-th port by $\sqrt{\alpha_k^{\mathrm{UR}}}h_k^l$, where $\alpha_k^{\mathrm{UR}}$ is the large-scale fading component related to the distance
between the user $k$ and the $k$-th FAR, and $h_k^{l}$ is the normalized channel gain following a circularly symmetric complex Gaussian distribution with zero mean and unit variance\cite{11193779}, i.e., $h_k^{l} \sim \mathcal{CN} (0,1)$. Furthermore, the square of the amplitude of $h_k^l$, $|h_k^l|^2$, follows the exponential distribution with the parameter $1$.
 
The received signal at the $k$-th FAR can be obtained as
\begin{equation}\label{URlink}
    y_k^{\mathrm{UR}} = \sqrt{p_{k}^{\mathrm{U}}} \sqrt{\alpha_k^{\mathrm{UR}}} h_{k}^{\mathrm{UR}} x_k + n_{k},
\end{equation}
where ${p_{k}^{\mathrm{U}}}$ is transmit power of user $k$, $h_{k}^{\mathrm{UR}}$ is channel gain of the optimal port which maximizes signal-to-noise ratio ($\mathrm{SNR}$) of the received signal, $x_k \sim \mathcal{CN} (0,1)$ is the transmitted signal of user $k$, and $n_k$ $\sim$ $\mathcal{CN}$($0$,$\sigma_k^2$) is the additive white Gaussian noise (AWGN). Mathematically, $h_{k}^{\mathrm{UR}}$ can be obtained as
\begin{equation}\label{FAS channel gain}
    |h_k^{\mathrm{UR}}|^2 = \mathop{\mathrm{max}} \{|h_k^1|^2,|h_k^2|^2,\dots,|h_k^N|^2\},
\end{equation}
where $\{h_k^l\}_{\forall l}$ are correlated with the correlation matrix $\mathbf{J}$,
\begin{align}\label{SpatialCorrMatrix}
    \begin{aligned}
      \mathbf{J} 
      =\begin{bmatrix}
      1 & J_{1,2} & \dots & J_{1,N} \\
      J_{2,1} & 1 & \dots & J_{2,N} \\
      \vdots & \vdots & &  \vdots\\
      J_{N,1} & J_{N,2} & \dots & 1 \\
      \end{bmatrix},
    \end{aligned}
\end{align}
where the entry $J_{k,l}$ is the spatial correlation between the $k$-th port and the $l$-th port. In specific, the spatial correlation between the $(n_1,n_2)$-th port and $(\tilde{n}_1,\tilde{n}_2)$-th port can be described as\cite{10303274}
\begin{align}
\nonumber
    &J_{(n_1,n_2),(\tilde{n}_1,\tilde{n}_2)}\\
    &=j_0\left(2\pi\sqrt{\left(\frac{|n_1-\tilde{n}_1|}{N_1-1}W_1\right)^2+\left(\frac{|n_2-\tilde{n}_2|}{N_2-1}W_2\right)^2}\right),
\end{align}
where $j_0(\cdot)$ is the spherical Bessel function of the first kind.

\subsection{Communication Model}
During one specific transmission, user $k$ first transmits the message both to the BS and the $k$-th FAR. For the BS, the received signal of this time slot can be given as
\begin{equation}
    y_k^{\mathrm{UB}} = \sqrt{p_k^{\mathrm{U}}} \sqrt{\alpha_k^{\mathrm{UB}}}h_k^{\mathrm{UB}}x_k+n_0,
\end{equation}
where $\alpha_k^{\mathrm{UB}}$ is the large-scale component related to the distance between the $k$-th FAR and BS, $h_k^{\mathrm{UB}} \sim \mathcal{CN}(0,1)$ is the normalized channel gain from user $k$ to BS, and $n_0 \sim \mathcal{CN}(0,\sigma^2_0)$ is the AWGN at the BS.

If the $k$-th FAR uses AF scheme for relaying, the received signal at the BS can be given as
\begin{equation}\label{AF RB}
    y_k^{\mathrm{AF}} = \sqrt{p_{k}^{\mathrm{R}}} \sqrt{\alpha_k^{\mathrm{RB}}} h_{k}^{\mathrm{RB}} \left(A_ky_k^{\mathrm{UR}}\right) + n_0,
\end{equation}
where 
$\alpha_k^{\mathrm{RB}}$ is the large-scale component related to the distance between the FAR and the BS, $h_{k}^{\mathrm{RB}}$ is the normalized channel gain of the optimal port sending signals from the $k$-th FAR to BS, and $A_k = \sqrt{\frac{1}{p_{k}^{\mathrm{U}}\alpha_k^{\mathrm{UR}}|h_{k}^{\mathrm{UR}}|^2+\sigma_k^2}}$ is the amplifying gain. 

Under the considered model, the BS combines the signals $y_k^{\mathrm{UB}}$ and $y_k^{\mathrm{AF}}$ with methods such as MRC, and the overall SNR of the received signal at the BS can be given as\cite{5545638} 
\begin{equation}\label{AF SNR}
    \Gamma_k^{\mathrm{AF}} 
    = p_k^\mathrm{U}{\gamma}_k^{\mathrm{UB}} + \frac{p_{k}^{\mathrm{U}}\gamma_k^{\mathrm{UR}}p_{k}^{\mathrm{R}}{\gamma}_k^{\mathrm{RB}}}{p_{k}^{\mathrm{R}}{\gamma}_k^{\mathrm{RB}}+p_{k}^{\mathrm{U}}\gamma_k^{\mathrm{UR}}+1},
\end{equation}
where $\gamma_k^\mathrm{UB} = \frac{\alpha_k^{\mathrm{UB}}|h_{k}^{\mathrm{UB}}|^2}{\sigma_0^2}$, $\gamma_k^{\mathrm{UR}} = \frac{ \alpha_k^{\mathrm{UR}} |h_{k}^{\mathrm{UR}}|^2}{\sigma_k^2}$ and $\gamma_k^\mathrm{RB} = \frac{\alpha_k^{\mathrm{RB}}|h_{k}^{\mathrm{RB}}|^2}{\sigma_0^2}$ are SNRs normalized by the corresponding transmit powers. Moreover, we define $\Gamma_k^{\mathrm{UB}}=p_k^{\mathrm{U}}\gamma_k^\mathrm{UB}$, $\Gamma_k^{\mathrm{UR}}=p_k^{\mathrm{U}}\gamma_k^\mathrm{UR}$ and $\Gamma_k^\mathrm{RB}=p_{k}^{\mathrm{R}}\gamma_k^{\mathrm{RB}}$ as the corresponding SNRs.

On the other hand, if adopt the DF scheme, the FAR first decodes the received signal and transmits it to the BS. As is widely adopted in relay communication networks, such as \cite{1362898,5397895,6138259}, we assume that the signal can be decoded without any error. 
Then, the received signal at the BS and its corresponding SNR can be respectively given as
\begin{align}
    y_k^{\mathrm{DF}} &= \sqrt{p_{k}^{\mathrm{R}}} \sqrt{\alpha_k^{\mathrm{RB}}} h_{k}^{\mathrm{RB}} x_k + n_0,
\end{align}
and
\begin{align}
    \Gamma_k^{\mathrm{DF}} 
    &= \mathrm{min}\{{\Gamma}_k^{\mathrm{UB}}+ {\Gamma}_k^\mathrm{RB},\Gamma_k^{\mathrm{UR}}\}.
\end{align}

As a result, the achievable rate of user $k$ can be given as 
\begin{equation}
    r_k = \frac{1}{2}b_k \mathrm{log}_2(1+\Gamma_k),
\end{equation}
where $b_k$ is the bandwidth allocated to user $k$, $\Gamma_k = \mu_k \Gamma_k^\mathrm{AF} + (1-\mu_k) \Gamma_k^{\mathrm{DF}}$, and $\mu_k=\{1,0\}$.


\subsection{Problem Formulation}
We assume that optimal port of FAS can be obtained using methods such as \cite{10615841,aka2025power}. 
Mathematically, 
the sum-rate maximization problem can be formulated as
\begin{subequations}\label{sys1max0}
    \begin{align} 
       \mathop{\max}_{\mathbf{p}^{\mathrm{U}}, \mathbf{p}^{\mathrm{R}}, \mathbf{b},\bm{\mu}} \quad &\frac{1}{2}\sum_{k=1}^K b_k \mathrm{log}_2(1+\mu_k \Gamma_k^\mathrm{AF} + (1-\mu_k) \Gamma_k^{\mathrm{DF}}) ,\tag{\ref{sys1max0}}\\
         \textrm{s.t.} \qquad 
            & \sum_{k=1}^K b_k \leq B,\\
         &r_k \geq R_{k,\mathrm{min}},\forall k \in \mathcal{K}, \\
   & 0\leq p_k^{\mathrm{U}}\leq P_{k}^{\mathrm{U},\mathrm{max}},\forall k \in \mathcal{K},\\
   & 0\leq p_k^{\mathrm{R}}\leq P_{k}^{\mathrm{R},\mathrm{max}},\forall k \in \mathcal{K},\\
   & \mu_k = \{0,1\}, \forall k\in \mathcal{K},\\
   & b_k \geq 0, \forall k \in \mathcal{K},
    \end{align}
\end{subequations}
where $\mathbf{p}^{\mathrm{U}}=[p_1^{\mathrm{U}},\dots,p_K^{\mathrm{U}}]$, $\mathbf{p}^{\mathrm{R}}=[p_1^{\mathrm{R}},\dots,p_K^{\mathrm{R}}]$ and $\mathbf{b}=[b_1,\dots,b_K]$, $\bm{\mu}=[\mu_1,\dots,\mu_K]$ is the relaying scheme selection vector, $\mathcal{K}=\{1,\dots,K\}$ is a number set, $B$ is the total bandwidth of the system, $R_{k,\mathrm{min}}$ is the minimum achievable rate of user $k$, and $P_{k}^{\mathrm{U},\mathrm{max}}$ and $P_{k}^{\mathrm{R},\mathrm{max}}$ are the maximum transmit powers of user $k$ and the $k$-th FAR, respectively.


\section{Algorithm Design}
In this section, we first propose an OP analysis-based for each FAR node to choose the relaying scheme, based on which the FAR node minimizes the OP
. Then, to solve problem \eqref{sys1max0}, we obtain the closed form of the optimal bandwidth allocation, and by substituting it into the original problem, we reformulate problem \eqref{sys1max0} as a power control problem. Finally, we propose a low complexity method to solve this power control problem.

\subsection{OP analysis-based method for relaying scheme selection}\label{Threshold}
Given the fact that FAR nodes cannot always obtain CSI of links from users to BS and from FARs to BS, without loss of generality, we assume that the $k$-th FAR node has the average SNRs from the user $k$ to the BS and from itself to the BS, denoted as $\bar{\Gamma}_k^{\mathrm{UB}}= p_k^{\mathrm{U}} \bar{\gamma}_{k}^{\mathrm{UB}}$ and $\bar{\Gamma}_k^{\mathrm{RB}}= p_k^{\mathrm{R}}\bar{\gamma}_{k}^{\mathrm{RB}}$, respectively. 

Then, if use the AF scheme for relaying, the $k$-th FAR can estimate the OP of this transmission as
\begin{align}\label{AF OP}
\nonumber
    \mathbb{P}_{\mathrm{AF}}(\xi) &= \mathbb{P}\left(\frac{1}{2}\mathrm{log}_2(1+\Gamma_k^{\mathrm{AF}})<\xi\right)\\
\nonumber
    &=\mathbb{P}\left(p_k^\mathrm{U}\bar{\gamma}_k^{\mathrm{UB}} + \frac{p_{k}^{\mathrm{U}}\gamma_k^{\mathrm{UR}}p_{k}^{\mathrm{R}}\bar{\gamma}^{\mathrm{RB}}}{p_{k}^{\mathrm{U}}\gamma_k^{\mathrm{UR}}+p_{k}^{\mathrm{R}}\bar{\gamma}^{\mathrm{RB}}+1}<C_\mathrm{th}\right)\\
    &\overset{(\mathrm{a})}{=}
    \left\{
        \begin{aligned}
     &1, \ \ \ \ \ \ \ \ \ \ p_k^{\mathrm{U}}\bar{\gamma}_k^{\mathrm{UB}}+p_k^{\mathrm{R}}\bar{\gamma}_k^{\mathrm{RB}} < C_\mathrm{th}\\
       & F_{|h_k^{\mathrm{UR}}|^2}(\xi_{\mathrm{AF}}(p_k^{\mathrm{U}},p_k^{\mathrm{R}})),\ \ \ \ \ \ \ \ \mathrm{otherwise}
    \end{aligned}
    \right.,  
\end{align}
where $C_\mathrm{th}=2^{2\xi}-1$ is the outage threshold, $\xi_{\mathrm{AF}}(p_k^{\mathrm{U}},p_k^{\mathrm{R}})$ is a function determined by the variables $p_k^{\mathrm{U}}$ and $p_k^{\mathrm{R}}$, given as
\begin{equation}
    \xi_{\mathrm{AF}}(P_k^{\mathrm{U}},p_k^{\mathrm{R}})=\frac{\sigma_k^2(p_{k}^{\mathrm{R}}\bar{\gamma}_k^{\mathrm{RB}}+1)(C_\mathrm{th}-p_k^\mathrm{U}\bar{\gamma}_k^{\mathrm{UB}})}{\alpha_k^{\mathrm{UR}} p_k^\mathrm{U}[p_k^\mathrm{U}\bar{\gamma}_k^{\mathrm{UB}}+p_{k}^{\mathrm{R}}\bar{\gamma}_k^{\mathrm{RB}}-C_\mathrm{th}]},
\end{equation}
and $F_{|h_k^{\mathrm{UR}}|^2}$($\cdot$) is the cumulative density function (CDF) of the random variable $|h_k^{\mathrm{UR}}|^2$. The equality $({\mathrm{a}})$ holds mathematically since only the term $\gamma_k^{\mathrm{UR}}$ contains the random variable $|h_k^{\mathrm{UR}}|^2$, $\mathbb{P}_{\mathrm{AF}}(\xi)$ can be viewed as the CDF of $|h_k^{\mathrm{UR}}|^2$, and when $p_k^{\mathrm{U}}\bar{\gamma}_k^{\mathrm{UB}}+p_k^{\mathrm{R}}\bar{\gamma}_k^{\mathrm{RB}} < C_\mathrm{th}$, the CDF can be obtained as
\begin{equation}
    \mathbb{P}\left(|h_k^{\mathrm{UR}}|^2>\xi_{\mathrm{AF}}(p_k^{\mathrm{U}},p_k^{\mathrm{R}})\right)=1,
\end{equation}
where $\xi_{\mathrm{AF}}(p_k^{\mathrm{U}},p_k^{\mathrm{R}})$ is a negative number.

Similarly, if the FAR node performs the DF scheme, the OP of this transmission can be given as 
\begin{align}\label{DF OP}
\nonumber
    \mathbb{P}_{\mathrm{DF}}(\xi) &= \mathbb{P}\left(\frac{1}{2}\mathrm{log}_2(1+\Gamma_k^{\mathrm{DF}})<\xi\right)\\
    \nonumber
    &=\mathbb{P}\left(\mathrm{min}\{p_k^{\mathrm{U}}\bar{\gamma}^{\mathrm{UB}}+ p_k^{\mathrm{R}}\bar{\gamma}^\mathrm{RB},p_{k}^{\mathrm{U}}\gamma_k^{\mathrm{UR}}\}<C_{\mathrm{th}}\right)\\
    &\overset{(\mathrm{b})}{=}
    \left\{
        \begin{aligned}
     &1, \ \ \ \ \ \ \ \ \ \ p_k^{\mathrm{U}}\bar{\gamma}_k^{\mathrm{UB}}+p_k^{\mathrm{R}}\bar{\gamma}_k^{\mathrm{RB}} < C_{\mathrm{th}}\\
       & F_{|h_k^{\mathrm{UR}}|^2}(\xi_{\mathrm{DF}}(p_k^{\mathrm{U}},p_k^{\mathrm{R}})),\ \ \ \ \ \ \ \ \mathrm{otherwise}
    \end{aligned}
    \right.,
\end{align}
where 
\begin{equation}
    \xi_{\mathrm{DF}}(p_k^{\mathrm{U}},p_k^{\mathrm{R}})=\frac{\sigma^2_k(2^{2\xi}-1)}{\alpha_k^{\mathrm{UR}} p_k^{\mathrm{U}}}.
\end{equation}
The equality $(\mathrm{b})$ holds because the FAR node can first judge whether $p_k^{\mathrm{U}}\bar{\gamma}_k^{\mathrm{UB}}+p_k^{\mathrm{R}}\bar{\gamma}_k^{\mathrm{RB}} \geq C_{\mathrm{th}}$, where when the inequality does not hold, $\mathbb{P}_{\mathrm{DF}}(\xi)$ always equals $1$, and when inequality holds, we can obtain the equality $(\mathrm{b})$. 



Based on the marginal distributions of $\{|h_k^l|^2\}_{\forall l}$, we introduce copula theory\cite{10.5555/1952073} to acquire the CDF of $|h_k^{\mathrm{UR}}|^2$, and specifically exploit the Guassian copula function\cite{10678877} to approximate its numerical value. Mathematically, the CDF of $|h_k^{\mathrm{UR}}|^2$ can be presented as
\begin{equation}\label{GaussionCopula}
F_{|h_k^{\mathrm{UR}}|^2}\left(x\right)=\Phi_{\mathbf{J}}\left(\phi^{-1}\left(F_{\left|h_k^{1}\right|^2}\left(x\right)\right),\ldots,\phi^{-1}\left(F_{\left|h_{k}^N\right|^2}\left(x\right)\right)\right),
\end{equation}
where $\Phi_\mathbf{J}(\cdot)$ is the joint CDF of the multivariate normal distribution with zero mean vector and correlation matrix $\mathbf{J}$, $\phi^{-1}(\cdot)$ is the quantile function of the standard normal distribution, i.e., $\phi^{-1}(x)=\sqrt{2}\ \mathrm{erf}^{-1}(2x-1)$, in which $\mathrm{erf}^{-1}$ is the inverse function of error function $\mathrm{erf}(x) = \frac{2}{\sqrt{\pi}}\int_0^{x}e^{-r^2} \mathrm{d}r$. In particular, we choose the matrix $\mathbf{J}$ given in \eqref{SpatialCorrMatrix} as the correlation matrix of $\bm{\Phi}$ to control the degree of dependence between the correlated random variables. 

Based on \eqref{GaussionCopula}, Fig.~\ref{AF_OP} and Fig~\ref{DF_OP} illustrate the outage probability versus different transmit powers and outage thresholds with the use of AF and DF, respectively. The OPs of both schemes are close to $1$ when the transmit powers are small or the threshold is high. With a determined outage threshold, OP gradually decreases with the transmit power of the user or of the FAR increasing. Moreover, to more intuitively demonstrate the numerical relationship of OP between using AF and DF for relaying, we develop Fig.~\ref{muk} to show the value of the binary variable $\mu_k$, i.e., the $k$-entry in the relaying scheme selection vector $\bm{\mu}$, which can be mathematically given as 
\begin{align}\label{OP-min Principle}
    \mu_k &= \varepsilon\left(F_{|h_k^{\mathrm{UR}}|^2}\left(\xi_{\mathrm{DF}}\right)-F_{|h_k^{\mathrm{UR}}|^2}\left(\xi_{\mathrm{AF}}\right)\right),
\end{align}
where $\varepsilon(\cdot)$ is the Heaviside function, specified as $\varepsilon(a)=1,\ \mathrm{when}\ a \geq 0$ and $\varepsilon(a)=0,\ \mathrm{when}\ a<0$. As shown in Fig.~\ref{muk}, in the simulated range of $C_{\mathrm{th}}$, OP with DF is lower than with AF when the transmit powers are relatively small, and to obtain a lower OP, the $k$-th FAR chooses DF for relaying under this condition, i.e., $\mu_k=0$. As the transmit powers increase, OP with using DF become higher than that with AF, and the $k$-th FAR determines $\mu_k=1$. In particular, $\mu_k = \mathrm{NaN}$ means whether using AF or DF, OP is always $1$, i.e., when $p_k^{\mathrm{U}}\bar{\gamma}_k^{\mathrm{UB}}+p_k^{\mathrm{R}}\bar{\gamma}_k^{\mathrm{RB}} < C_\mathrm{th}$, $\mathbb{P}(\xi)=1$, which is aligned with the outcomes of \eqref{AF OP} and \eqref{DF OP}.

\begin{figure*}[t]
    \centering
    \vspace{-2em}
    \subfigure[]{
        \begin{minipage}{0.33\textwidth}
            \centering
            \includegraphics[width=1\textwidth]{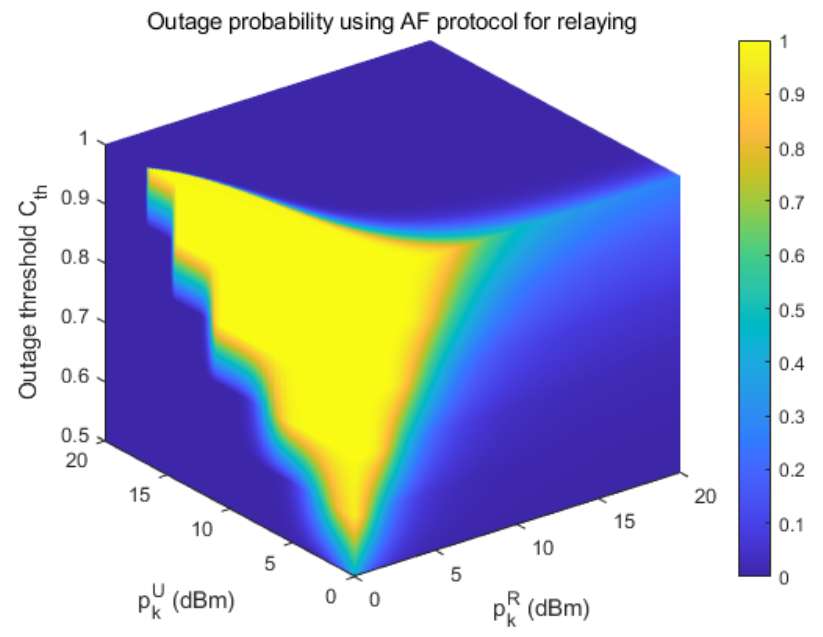}
            \label{AF_OP}
    \end{minipage}}
    \hspace{-5mm}
    \subfigure[]{
        \begin{minipage}{0.33\textwidth}
            \centering
            \includegraphics[width=1\textwidth]{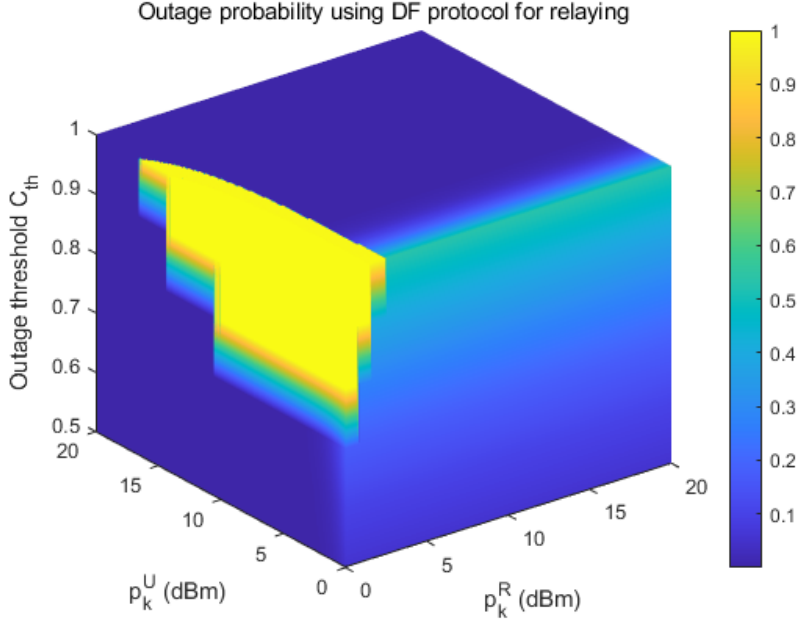}
            \label{DF_OP}	
    \end{minipage}}
        \hspace{-5mm}
        \subfigure[]{
            \begin{minipage}{0.33\textwidth}
                \centering
                \includegraphics[width=1\textwidth]{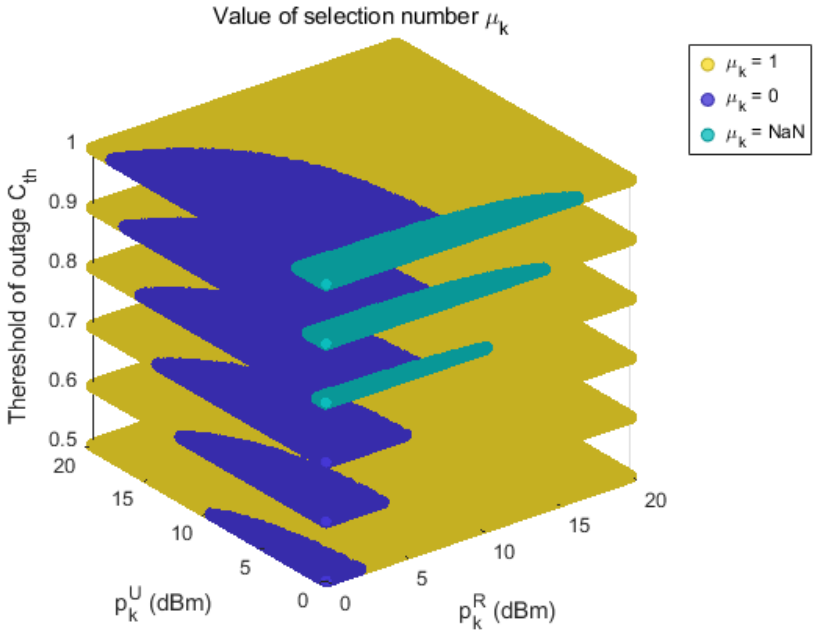}
                \label{muk}	
        \end{minipage}}
    \vspace{-0.6em}
    \caption{(a) Outage probability using AF scheme for relaying, (b) Outage probability using DF scheme for relaying, and (c) Value of selection number $\mu_k$.}
    \label{fig.} 
    \vspace{-1em}
\end{figure*}

\subsection{Sum-rate Maximization}\label{sumrate max}
After determining the OP minimized principle for relaying scheme selection, we solve the problem \eqref{sys1max0} as follows.

\subsubsection{Bandwidth allocation optimization}
With given $\mathbf{p}^{\mathrm{U}}$ and $\mathbf{p}^{\mathrm{R}}$, selection vector $\bm{\mu}$ can be determined based on $\eqref{OP-min Principle}$ and we can formulate the bandwidth allocation sub-problem as
\begin{subequations}\label{subBand}
    \begin{align} 
       \mathop{\max}_{\mathbf{b}} \quad &\sum_{k=1}^K \frac{1}{2}b_k \mathrm{log}_2(1+\Gamma_k) ,\tag{\ref{subBand}}\\
         \textrm{s.t.} \qquad 
            & \sum_{k=1}^K b_k \leq B,\\
         &\frac{1}{2}b_k \mathrm{log}_2(1+\Gamma_k) \geq R_{k,\mathrm{min}},\forall k \in \mathcal{K}, \\
   & b_k \geq 0, \forall k \in \mathcal{K},
    \end{align}
\end{subequations}
which is a linear optimization problem. Hence, to maximize the sum rate of the system, additional bandwidth is supposed to be allocated to the user with the highest SNR. Assume user $n$ holds the highest SNR, and $n$ can be presented as
\begin{equation}\label{n}
    n = \mathrm{arg}\ \mathop{\mathrm{max}}_{k}\ \{\Gamma_k\}.
\end{equation}
Then, the closed form of the optimal bandwidth allocation vector $\mathbf{b}$ can be given as
\begin{equation}\label{opt B}
b_k^*=\left\{
    \begin{aligned}
        &\frac{2R_{k,\mathrm{min}}}{\mathrm{log}_2(1+\Gamma_k)}, \forall k\in \mathcal{K}\ ,  k\neq n\\
        &B - \sum_{k=1,k\neq n}^K \frac{2R_{k,\mathrm{min}}}{\mathrm{log}_2(1+\Gamma_k)}, k=n
    \end{aligned}
    \right..
\end{equation}
\subsubsection{Power control optimization}
Based on the OP-minimized principle, $\mu_k$ is determined with given $p_k^{\mathrm{U}}$ and $p_k^{\mathrm{R}}$. Hence, we only need to optimize $\mathbf{p}^{\mathrm{U}}$ and $\mathbf{p}^{\mathrm{R}}$ after we obtain the optimal $\mathbf{b}$. Moreover, by substituting \eqref{opt B} into the problem \eqref{sys1max0}, we can equivalently change original problem into the power control optimization problem formulated as
\begin{subequations}\label{subP}
    \begin{align} 
       \mathop{\max}_{\mathbf{p}^{\mathrm{U}}, \mathbf{p}^{\mathrm{R}}}& \frac{1}{2}\left(B - \sum_{k=1,k\neq n}^K \frac{2R_{k,\mathrm{min}}}{\mathrm{log}_2(1+\Gamma_k)}\right) \mathrm{log}_2(1+\Gamma_n) ,\tag{\ref{subP}}\\
         \textrm{s.t.} \quad 
         & B - \sum_{k=1,k\neq n}^K \frac{2R_{k,\mathrm{min}}}{\mathrm{log}_2(1+\Gamma_k)}>0,\\
         &\frac{1}{2}(B - \sum_{k=1,k\neq n}^K \frac{2R_{k,\mathrm{min}}}{\mathrm{log}_2(1+\Gamma_k)})\mathrm{log}_2(1+\Gamma_n)\geq R_{n,\mathrm{min}},\\
          &\Gamma_k={\mu}_k\Gamma_{\mathrm{AF}}+(1-{\mu}_k)\Gamma_{\mathrm{DF}}, \forall k \in \mathcal{K},\\
         &(\ref{sys1max0}\mathrm{c}),
         \nonumber(\ref{sys1max0}\mathrm{d}),\eqref{OP-min Principle},\eqref{n},
    \end{align}
\end{subequations}
where constraints $(\ref{subP}\mathrm{a})$ and $(\ref{subP}\mathrm{b})$ can be omitted, since they should always be satisfied, or there is no solution to this problem. The power control problem is not convex, because the objective function, constraints \eqref{OP-min Principle}, \eqref{n} and $(\ref{subP}\mathrm{c})$ are not convex. Due to the fact that the selection vector for relaying scheme varies when either $\mathbf{p}^\mathrm{U}$ or $\mathbf{p}^\mathrm{R}$ changes, the problem is complicated to solve.

To solve this problem, we first recall that when $p_k^{\mathrm{U}}\bar{\gamma}_k^{\mathrm{UB}}+p_k^{\mathrm{R}}\bar{\gamma}_k^{\mathrm{RB}} < C_{\mathrm{th}}$, the OP always equals 1, and thus neither of the schemes can be chosen. Without loss of generality, we can reformulate problem \eqref{subP} as the following problem
\begin{subequations}\label{subP2}
    \begin{align} 
       \mathop{\max}_{\mathbf{p}^{\mathrm{U}}, \mathbf{p}^{\mathrm{R}}} \quad & \frac{1}{2}\left(B - \sum_{k=1,k\neq n}^K \frac{2R_{k,\mathrm{min}}}{\mathrm{log}_2(1+\Gamma_k)}\right) \mathrm{log}_2(1+\Gamma_n) ,\tag{\ref{subP2}}\\
         \textrm{s.t.} \qquad 
   & P_k^{\mathrm{U,min}}\leq p_k^{\mathrm{U}}\leq P_{k}^{\mathrm{U,max}},\forall k \in \mathcal{K},\\
   & P_k^{\mathrm{R,min}}\leq p_k^{\mathrm{R}}\leq P_{k}^{\mathrm{R,max}},\forall k \in \mathcal{K},\\
   &P_k^{\mathrm{U,min}}\bar{\gamma}_k^{\mathrm{UB}}+P_k^{\mathrm{R,min}}\bar{\gamma}_k^{\mathrm{RB}} \geq C_{\mathrm{th}},\\
   \nonumber
   &\eqref{OP-min Principle},\eqref{n},(\ref{subP}\mathrm{c}).
    \end{align}
\end{subequations}

Based on the fact that $\Gamma_{k}^{\mathrm{AF}}$ and $\Gamma_{k}^{\mathrm{DF}}$ are both monotonically non-decreasing with respective to $p_k^\mathrm{U}$ and $p_k^{\mathrm{R}}$, we propose the following lemmas to help us solve the problem \eqref{subP2}.
\begin{lemma}\label{lemma1}
    If $\tilde{\varepsilon}(P_k^{\mathrm{U,max}},P_k^{\mathrm{R,max}})=0$, we always have
    \begin{equation}
        \tilde{\varepsilon}(P_k^{\mathrm{U,max}}-\delta_1,P_k^{\mathrm{R,max}}-\delta_2)=0,
    \end{equation}
where $0\leq\delta_1 \leq \Delta P_k^{\mathrm{U}}$, $0\leq\delta_2 \leq \Delta P_k^{\mathrm{R}}$, $\Delta P_k^{\mathrm{U}}=P_k^{\mathrm{U,max}}-P_k^{\mathrm{U,min}}$, and $\Delta P_k^{\mathrm{R}}=P_k^{\mathrm{R,max}}-P_k^{\mathrm{R,min}}$.
\end{lemma}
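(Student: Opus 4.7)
The plan is to reduce the statement to a simple monotonicity property. First I would unpack the indicator: by \eqref{OP-min Principle}, $\tilde{\varepsilon}(p_k^{\mathrm{U}},p_k^{\mathrm{R}})=0$ means $F_{|h_k^{\mathrm{UR}}|^2}(\xi_{\mathrm{DF}})<F_{|h_k^{\mathrm{UR}}|^2}(\xi_{\mathrm{AF}})$. Since $F_{|h_k^{\mathrm{UR}}|^2}$ is non-decreasing and the feasibility constraint $(\ref{subP2}\mathrm{c})$ guarantees $p_k^{\mathrm{U}}\bar{\gamma}_k^{\mathrm{UB}}+p_k^{\mathrm{R}}\bar{\gamma}_k^{\mathrm{RB}}\ge C_{\mathrm{th}}$ over the whole box, it suffices to analyse the sign of $\xi_{\mathrm{AF}}-\xi_{\mathrm{DF}}$. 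The edge case $C_{\mathrm{th}}-p_k^{\mathrm{U}}\bar{\gamma}_k^{\mathrm{UB}}\le 0$ would give $\xi_{\mathrm{AF}}\le 0<\xi_{\mathrm{DF}}$ and hence $\tilde{\varepsilon}=1$, which is excluded by hypothesis.

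Second I would perform the algebraic reduction. Writing $a=p_k^{\mathrm{U}}\bar{\gamma}_k^{\mathrm{UB}}$ and $b=p_k^{\mathrm{R}}\bar{\gamma}_k^{\mathrm{RB}}$, substituting the closed-form expressions for $\xi_{\mathrm{AF}}$ and $\xi_{\mathrm{DF}}$, and multiplying through by the positive factor $\alpha_k^{\mathrm{UR}}p_k^{\mathrm{U}}(a+b-C_{\mathrm{th}})/\sigma_k^2$, the inequality $\xi_{\mathrm{DF}}<\xi_{\mathrm{AF}}$ rearranges (after expanding $(b+1)(C_{\mathrm{th}}-a)-C_{\mathrm{th}}(a+b-C_{\mathrm{th}})$) to
\begin{equation*}
    a\bigl(1+C_{\mathrm{th}}+b\bigr)<C_{\mathrm{th}}\bigl(1+C_{\mathrm{th}}\bigr).
\end{equation*}
The right-hand side is fixed by the outage threshold $\xi$, whereas the left-hand side is strictly increasing in both $a$ and $b$, and therefore strictly increasing in both $p_k^{\mathrm{U}}$ and $p_k^{\mathrm{R}}$.

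Finally I would close by monotonicity: if this inequality holds at $(P_k^{\mathrm{U,max}},P_k^{\mathrm{R,max}})$, then for any $0\le\delta_1\le\Delta P_k^{\mathrm{U}}$ and $0\le\delta_2\le\Delta P_k^{\mathrm{R}}$ the pair $(P_k^{\mathrm{U,max}}-\delta_1,P_k^{\mathrm{R,max}}-\delta_2)$ produces a strictly smaller left-hand side, so the inequality is preserved and $\tilde{\varepsilon}=0$ still holds. The main obstacle is the algebraic manipulation: one must clear the denominator $a+b-C_{\mathrm{th}}$ while tracking its sign (positive on the feasible box by $(\ref{subP2}\mathrm{c})$) and handle the degenerate subcase $C_{\mathrm{th}}-a\le 0$ separately, as noted above. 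Once the reduction to the affine inequality is in place, the monotonicity conclusion is immediate.
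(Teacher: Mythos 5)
Your proof is correct and follows essentially the same route as the paper's: both reduce the comparison $\xi_{\mathrm{AF}}>\xi_{\mathrm{DF}}$ (after clearing the positive denominator $p_k^{\mathrm{U}}\bar{\gamma}_k^{\mathrm{UB}}+p_k^{\mathrm{R}}\bar{\gamma}_k^{\mathrm{RB}}-C_{\mathrm{th}}$) to the inequality $(C_{\mathrm{th}}+1)\,p_k^{\mathrm{U}}\bar{\gamma}_k^{\mathrm{UB}}+p_k^{\mathrm{U}}\bar{\gamma}_k^{\mathrm{UB}}p_k^{\mathrm{R}}\bar{\gamma}_k^{\mathrm{RB}}<C_{\mathrm{th}}^2+C_{\mathrm{th}}$ and then invoke monotonicity of the left-hand side in both powers, the paper phrasing this as a contradiction and you phrasing it directly. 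Your version is in fact slightly more careful than the paper's, since you explicitly track the sign of the cleared denominator via constraint $(\ref{subP2}\mathrm{c})$ and dispose of the degenerate case $C_{\mathrm{th}}-p_k^{\mathrm{U}}\bar{\gamma}_k^{\mathrm{UB}}\le 0$, which the paper leaves implicit.
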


\begin{proof}\label{proof1}
   Since $\tilde{\varepsilon}(P_k^{\mathrm{U,max}},P_k^{\mathrm{R,max}})=0$, we have $\xi_{\mathrm{AF}}(P_k^{\mathrm{U,max}},P_k^{\mathrm{R,max}})>\xi_{\mathrm{DF}}(P_k^{\mathrm{U,max}},P_k^{\mathrm{R,max}})$, i.e.,
   \begin{equation}\label{Final 0}
      \frac{C_{\mathrm{th}}^2+C_{\mathrm{th}}}{(C_{\mathrm{th}}+1)P_k^{\mathrm{\mathrm{U,max}}}\bar{\gamma}_k^{\mathrm{UB}}+P_k^{\mathrm{\mathrm{U,max}}}\bar{\gamma}_k^{\mathrm{UB}}P_k^{\mathrm{\mathrm{R,max}}}\bar{\gamma}_k^{\mathrm{RB}}} > 1.
   \end{equation}
It is obvious that the denominator of the term on the left side of \eqref{Final 0} is monotonically increasing with respective to $p_k^\mathrm{U}$ and $p_k^{\mathrm{R}}$. Hence, if we assume that there are $\delta_1$ and $\delta_2$, making $\frac{\xi_{\mathrm{AF}}(P_k^{\mathrm{U,max}}-\delta_1,P_k^{\mathrm{R,max}}-\delta_2)}{\xi_{\mathrm{DF}}(P_k^{\mathrm{U,max}}-\delta_1,P_k^{\mathrm{R,max}}-\delta_2)} \leq 1$, we can further obtain
   \begin{align}
   \nonumber
 C_{\mathrm{th}}^2
 &+C_{\mathrm{th}}
 -(C_{\mathrm{th}}+1)(P_k^{\mathrm{\mathrm{U,max}}}
 -\delta_1)\bar{\gamma}_k^{\mathrm{UB}}\\
 &-(P_k^{\mathrm{\mathrm{U,max}}}-\delta_1)\bar{\gamma}_k^{\mathrm{UB}}(P_k^{\mathrm{\mathrm{R,max}}}-\delta_2)\bar{\gamma}_k^{\mathrm{RB}} \leq 0,
   \end{align}
   which is contrary to \eqref{Final 0}. Hence, $\frac{\xi_{\mathrm{AF}}(P_k^{\mathrm{U,max}}-\delta_1,P_k^{\mathrm{R,max}}-\delta_2)}{\xi_{\mathrm{DF}}(P_k^{\mathrm{U,max}}-\delta_1,P_k^{\mathrm{R,max}}-\delta_2)} > 1$, i.e., $\tilde{\varepsilon}(P_k^{\mathrm{U,max}}-\delta_1,P_k^{\mathrm{R,max}}-\delta_2)=0$.
\end{proof}

\begin{proposition}\label{proposition1}
    When $\tilde{\varepsilon}(P_k^{\mathrm{U,max}},P_k^{\mathrm{R,max}})=0$, optimal transmit powers can be obtained as 
    \begin{equation}
           (p_k^{\mathrm{U,*}},p_k^{\mathrm{R,*}})=
        (P_k^{\mathrm{U,max}},P_k^{\mathrm{R,max}}) .
    \end{equation}
\end{proposition}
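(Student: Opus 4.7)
My plan is to combine Lemma~\ref{lemma1} with a direct monotonicity argument on the reduced objective of problem~\eqref{subP2}. First, I would invoke Lemma~\ref{lemma1} with $(\delta_1,\delta_2)$ ranging over $[0,\Delta P_k^{\mathrm{U}}]\times[0,\Delta P_k^{\mathrm{R}}]$ to conclude that $\tilde{\varepsilon}(p_k^{\mathrm{U}},p_k^{\mathrm{R}})=0$, and hence $\mu_k=0$, for every feasible $(p_k^{\mathrm{U}},p_k^{\mathrm{R}})$ in the rectangle $[P_k^{\mathrm{U,min}},P_k^{\mathrm{U,max}}]\times[P_k^{\mathrm{R,min}},P_k^{\mathrm{R,max}}]$. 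This pins $\Gamma_k$ to $\Gamma_k^{\mathrm{DF}}=\min\{\Gamma_k^{\mathrm{UB}}+\Gamma_k^{\mathrm{RB}},\Gamma_k^{\mathrm{UR}}\}$ throughout the box, and since each component SNR is linear and non-decreasing in $p_k^{\mathrm{U}}$ or $p_k^{\mathrm{R}}$, the pointwise minimum inherits the non-decreasing property in both arguments.

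Next, holding every other user's power vector fixed, I would view the objective of \eqref{subP2} as a function of $\Gamma_k$ alone and split on whether user $k$ is currently the argmax $n$ from \eqref{n}. If $k\neq n$, only the penalty term $-2R_{k,\mathrm{min}}/\log_2(1+\Gamma_k)$ inside the bandwidth bracket depends on $\Gamma_k$ and is strictly increasing in it; multiplying by $\log_2(1+\Gamma_n)>0$ (positive by the implicit feasibility constraint~(\ref{subP}\mathrm{a})) preserves the sign. If $k=n$, the prefactor $\log_2(1+\Gamma_k)$ is strictly increasing in $\Gamma_k$ while the bracket is unchanged. Chaining these partial monotonicities with the first step then forces the optimum to the upper-right corner, $(p_k^{\mathrm{U,*}},p_k^{\mathrm{R,*}})=(P_k^{\mathrm{U,max}},P_k^{\mathrm{R,max}})$.

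The hard part will be reconciling the discrete switch of $n=\arg\max_i\Gamma_i$: the closed-form expression in \eqref{subP2} changes its functional form the instant $\Gamma_k$ overtakes the previous argmax, and I must rule out a downward jump that could nullify the monotonicity argument. I plan to handle this by tracking the full sum rate, which carries the constant contributions $\sum_{i\neq n}R_{i,\mathrm{min}}$ that are absorbed into the displayed objective; these terms precisely cancel the apparent jump, so both branches agree at $\Gamma_k=\Gamma_n$. Once continuity at the switching boundary is confirmed, piecewise monotonicity upgrades to global monotonicity in $\Gamma_k$ and the proposition follows.
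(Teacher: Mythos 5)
Your proposal is correct and follows essentially the same route as the paper: Lemma~\ref{lemma1} fixes the relaying scheme over the entire feasible power box, and monotonicity of the resulting SNR in $(p_k^{\mathrm{U}},p_k^{\mathrm{R}})$ pushes the optimum to the corner $(P_k^{\mathrm{U,max}},P_k^{\mathrm{R,max}})$. The only difference is that you additionally verify that the reformulated objective of \eqref{subP2} is monotone in $\Gamma_k$ --- including continuity of the full sum rate across the $\arg\max$ switch in \eqref{n} --- a step the paper's proof leaves implicit by arguing only from the monotonicity of $\Gamma_k^{\mathrm{AF}}$ and $\Gamma_k^{\mathrm{DF}}$ themselves.
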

\begin{proof}
   Based on Lemma~\ref{lemma1}, if $\tilde{\varepsilon}(P_k^{\mathrm{U,max}},P_k^{\mathrm{R,max}})=0$, we always have $\tilde{\varepsilon}(p_k^{\mathrm{U}},p_k^{\mathrm{R}})=0,\forall p_k^{\mathrm{U}},p_k^{\mathrm{R}}$. Then, since $\Gamma_{k}^{\mathrm{AF}}$ and $\Gamma_{k}^{\mathrm{DF}}$ are both monotonically non-decreasing with respective to $p_k^\mathrm{U}$ and $p_k^{\mathrm{R}}$, the optimal transmit powers of user $k$ and the $k$-th FAR are $p_k^{\mathrm{U,*}} = P_k^{\mathrm{U,max}}$ and $p_k^{\mathrm{R,*}} = P_k^{\mathrm{R,max}}$, respectively.
\end{proof}

\begin{figure*}[t]
    \centering
            \vspace{-2em}
    \subfigure[]{
        \begin{minipage}{0.33\textwidth}
            \centering
            \includegraphics[width=1\textwidth]{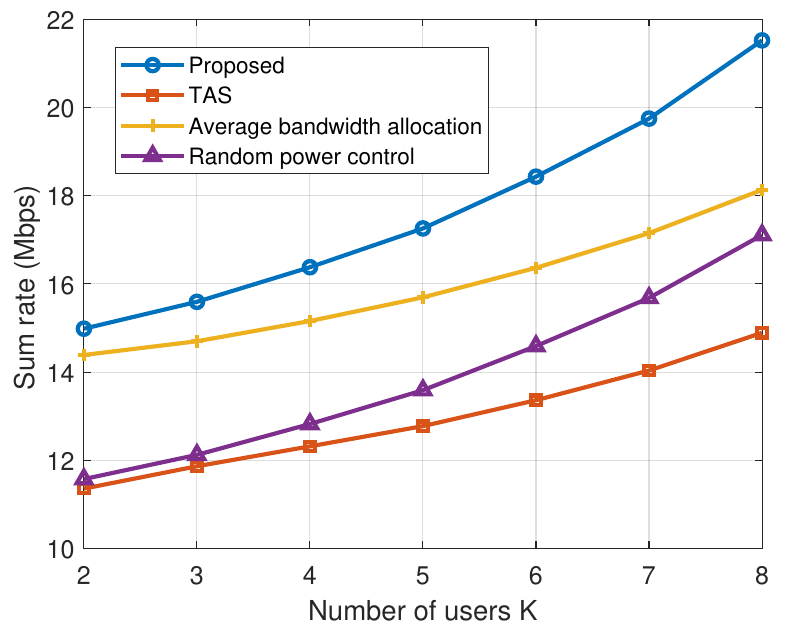}
            \label{UserNumber}
    \end{minipage}}
    \hspace{-5mm}
    \subfigure[]{
        \begin{minipage}{0.33\textwidth}
            \centering
            \includegraphics[width=1\textwidth]{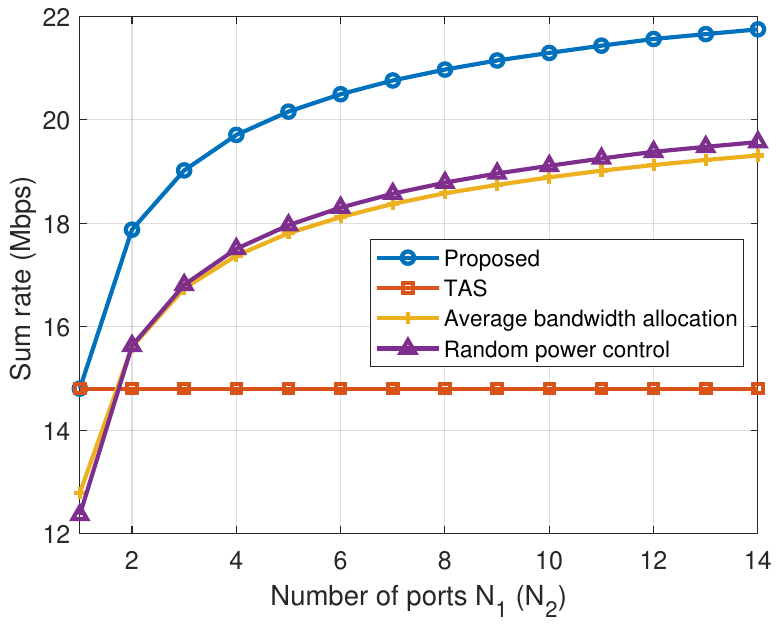}
            \label{N}	
    \end{minipage}}
        \hspace{-5mm}
        \subfigure[]{
            \begin{minipage}{0.33\textwidth}
                \centering
                \includegraphics[width=1\textwidth]{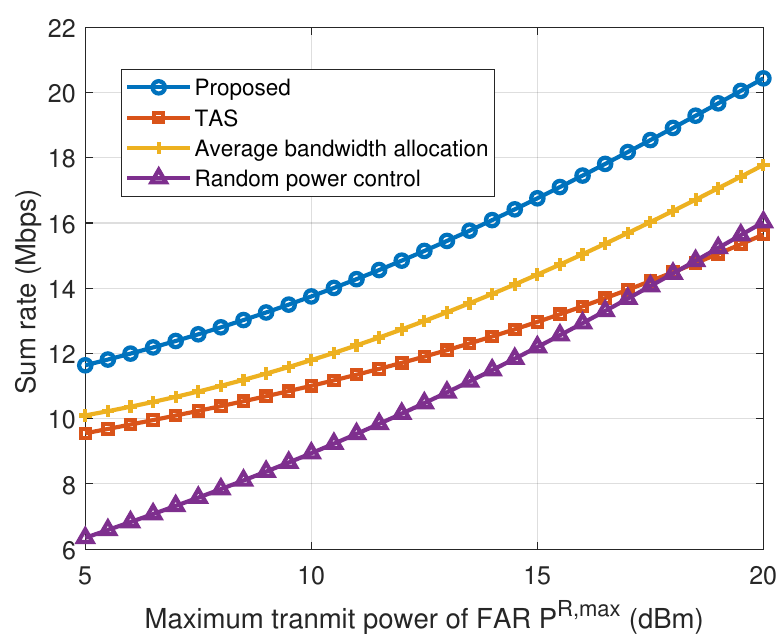}
                \label{Pr}	
        \end{minipage}}
            \vspace{-0.8em}
    \caption{Sum rate of the system versus (a) Number of users, (b) Number of ports, (c) Maximum transmit power of FAR. }
    \label{sumRate} 
    \vspace{-1.5em}
\end{figure*}

\begin{lemma}\label{lemma2}
        If $\tilde{\varepsilon}(P_k^{\mathrm{U,min}},P_k^{\mathrm{R,min}})=1$, we always have
    \begin{equation}
        \tilde{\varepsilon}(P_k^{\mathrm{U,min}}+\delta_1,P_k^{\mathrm{R,min}}+\delta_2)=1,
    \end{equation}
where $0\leq\delta_1 \leq \Delta P_k^{\mathrm{U}}$ and $0\leq\delta_2 \leq\Delta P_k^{\mathrm{R}}$.
\end{lemma}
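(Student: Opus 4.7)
The plan is to mirror the argument of Lemma~\ref{lemma1}, exploiting the monotonicity of the denominator that governs the AF-vs-DF comparison, but with the inequality pushed in the opposite direction. Recall that $\tilde{\varepsilon}(p_k^{\mathrm{U}},p_k^{\mathrm{R}})=1$ corresponds, through the OP-minimizing rule \eqref{OP-min Principle} and the monotonicity of the CDF $F_{|h_k^{\mathrm{UR}}|^2}(\cdot)$, to the inequality $\xi_{\mathrm{AF}}(p_k^{\mathrm{U}},p_k^{\mathrm{R}}) \leq \xi_{\mathrm{DF}}(p_k^{\mathrm{U}},p_k^{\mathrm{R}})$. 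Following the same algebraic manipulation used in the proof of Lemma~\ref{lemma1}, and valid under the feasibility condition $p_k^{\mathrm{U}}\bar{\gamma}_k^{\mathrm{UB}}+p_k^{\mathrm{R}}\bar{\gamma}_k^{\mathrm{RB}}\geq C_{\mathrm{th}}$ enforced by $(\ref{subP2}\mathrm{c})$, this is equivalent to
\begin{equation}
    \frac{C_{\mathrm{th}}^{2}+C_{\mathrm{th}}}{(C_{\mathrm{th}}+1)\,p_k^{\mathrm{U}}\bar{\gamma}_k^{\mathrm{UB}}+p_k^{\mathrm{U}}\bar{\gamma}_k^{\mathrm{UB}}\,p_k^{\mathrm{R}}\bar{\gamma}_k^{\mathrm{RB}}} \leq 1,
\end{equation}
i.e., the reverse of the strict inequality exploited in Lemma~\ref{lemma1}.

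First I would evaluate the displayed inequality at $(p_k^{\mathrm{U}},p_k^{\mathrm{R}})=(P_k^{\mathrm{U,min}},P_k^{\mathrm{R,min}})$, which is exactly the content of the hypothesis $\tilde{\varepsilon}(P_k^{\mathrm{U,min}},P_k^{\mathrm{R,min}})=1$. Next, I would observe that the denominator is strictly increasing in each of $p_k^{\mathrm{U}}$ and $p_k^{\mathrm{R}}$, because $\bar{\gamma}_k^{\mathrm{UB}}>0$ and $\bar{\gamma}_k^{\mathrm{RB}}>0$; hence increasing the arguments by $\delta_1,\delta_2\geq 0$ only enlarges it, so the whole fraction can only decrease and in particular stays $\leq 1$. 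Translating back through the same chain of equivalences yields $\xi_{\mathrm{AF}}(P_k^{\mathrm{U,min}}+\delta_1,P_k^{\mathrm{R,min}}+\delta_2) \leq \xi_{\mathrm{DF}}(P_k^{\mathrm{U,min}}+\delta_1,P_k^{\mathrm{R,min}}+\delta_2)$, i.e., $\tilde{\varepsilon}(P_k^{\mathrm{U,min}}+\delta_1,P_k^{\mathrm{R,min}}+\delta_2)=1$.

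The one point I would be careful about is ensuring that the feasibility condition $p_k^{\mathrm{U}}\bar{\gamma}_k^{\mathrm{UB}}+p_k^{\mathrm{R}}\bar{\gamma}_k^{\mathrm{RB}}\geq C_{\mathrm{th}}$ continues to hold after the perturbation, so that the regime of \eqref{AF OP}--\eqref{DF OP} in which both OPs collapse to $1$ is avoided and the comparison between $\xi_{\mathrm{AF}}$ and $\xi_{\mathrm{DF}}$ genuinely captures the choice of relaying scheme. This is essentially free: the left-hand side of that condition is non-decreasing in both arguments, and constraint $(\ref{subP2}\mathrm{c})$ already guarantees the bound at the minimum powers. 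Apart from this routine monotonicity check, the argument is the symmetric counterpart of Lemma~\ref{lemma1} with the strict inequality relaxed to a non-strict one, so no substantive obstacle is expected; the only minor bookkeeping is making sure the $\varepsilon(0)=1$ convention of the Heaviside function is consistent with the $\leq$ direction used throughout, as it indeed is.
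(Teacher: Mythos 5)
Your proposal is correct and matches the paper's intent exactly: the paper's own "proof" of Lemma~2 is literally the one-line remark "Similar to the proof of Lemma~1," and what you write out — translating $\tilde{\varepsilon}=1$ into the reversed inequality $\frac{C_{\mathrm{th}}^{2}+C_{\mathrm{th}}}{(C_{\mathrm{th}}+1)p_k^{\mathrm{U}}\bar{\gamma}_k^{\mathrm{UB}}+p_k^{\mathrm{U}}\bar{\gamma}_k^{\mathrm{UB}}p_k^{\mathrm{R}}\bar{\gamma}_k^{\mathrm{RB}}}\leq 1$ and invoking monotonicity of the denominator in $(p_k^{\mathrm{U}},p_k^{\mathrm{R}})$ — is precisely that analogue, stated directly rather than by contradiction. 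Your added care about preserving the feasibility condition $p_k^{\mathrm{U}}\bar{\gamma}_k^{\mathrm{UB}}+p_k^{\mathrm{R}}\bar{\gamma}_k^{\mathrm{RB}}\geq C_{\mathrm{th}}$ under the upward perturbation is a correct and worthwhile detail the paper leaves implicit.
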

\begin{proof}\label{proof2}
    Similar to the proof~\ref{proof1}.
\end{proof}

Based on Lemmas~\ref{lemma1} and~\ref{lemma2}, we can obtain the following proposition as
\begin{proposition}\label{proposition2}
    When $\tilde{\varepsilon}(P_k^{\mathrm{U,max}},P_k^{\mathrm{R,max}})=1$, optimal transmit powers can be obtained as 
    \begin{align}
    \nonumber
    &(p_k^{\mathrm{U,*}},p_k^{\mathrm{R,*}})=\\
    &\left\{
        \begin{aligned}
        &(\tilde{p}_k^{\mathrm{U}},\tilde{p}_k^{\mathrm{R}}),\ \mathrm{if}\ \tilde{C} \geq 1\ \mathrm{AND}\ \frac{\Gamma_k^{\mathrm{AF}}(P_k^{\mathrm{U,max}},P_k^{\mathrm{U,max}})}{\Gamma_k^{\mathrm{DF}}(\tilde{p}_k^{\mathrm{U}},\tilde{p}_k^{\mathrm{R}})}\leq1\\
            &(P_k^{\mathrm{U,max}},P_k^{\mathrm{R,max}}),\ \mathrm{otherwise}\\
        \end{aligned}
        \right.,
    \end{align}
    where $\tilde{C}=\frac{C_{\mathrm{th}}^2+C_{\mathrm{th}}}{(C_{\mathrm{th}}+1)P_k^{\mathrm{\mathrm{U,min}}}\bar{\gamma}_k^{\mathrm{UB}}+P_k^{\mathrm{\mathrm{U,min}}}\bar{\gamma}_k^{\mathrm{UB}}P_k^{\mathrm{\mathrm{R,min}}}\bar{\gamma}_k^{\mathrm{RB}}}$,
    $(\tilde{p}_k^{\mathrm{U}},\tilde{p}_k^{\mathrm{R}})$ is the optimal solution to problem \eqref{subP3}.
\end{proposition}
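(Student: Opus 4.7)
The plan is to reduce the joint optimization over $(p_k^{\mathrm{U}}, p_k^{\mathrm{R}})$ to a scalar problem in $\Gamma_k$ by first observing that the objective of \eqref{subP2} is separately monotonically non-decreasing in each $\Gamma_k$: for $k\neq n$ the penalty term $-2R_{k,\mathrm{min}}/\log_2(1+\Gamma_k)$ grows with $\Gamma_k$, and for $k=n$ the factor $\log_2(1+\Gamma_n)$ does. Consequently the per-user sub-problem amounts to maximizing $\Gamma_k$ over the rectangle $[P_k^{\mathrm{U,min}},P_k^{\mathrm{U,max}}]\times[P_k^{\mathrm{R,min}},P_k^{\mathrm{R,max}}]$, bearing in mind that the OP-minimizing rule $\tilde{\varepsilon}(p_k^{\mathrm{U}},p_k^{\mathrm{R}})$ decides whether $\Gamma_k$ equals $\Gamma_k^{\mathrm{AF}}$ or $\Gamma_k^{\mathrm{DF}}$ at each point.

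Next I would split on the sign of $\tilde{C}-1$. Inspecting the expression for $\tilde{C}$ and comparing with the inequality derived in Proof~\ref{proof1}, one sees that $\tilde{C}<1$ is exactly the condition $\tilde{\varepsilon}(P_k^{\mathrm{U,min}},P_k^{\mathrm{R,min}})=1$; Lemma~\ref{lemma2} then propagates this to the whole rectangle, so $\Gamma_k\equiv\Gamma_k^{\mathrm{AF}}$ throughout, and monotonicity of $\Gamma_k^{\mathrm{AF}}$ in both powers places the maximizer at $(P_k^{\mathrm{U,max}},P_k^{\mathrm{R,max}})$, matching the ``otherwise'' branch of the statement. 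If instead $\tilde{C}\geq 1$, the rectangle splits into a DF-subregion (containing the min corner, where $\tilde{\varepsilon}=0$) and an AF-subregion (containing the max corner, where $\tilde{\varepsilon}=1$), separated by the curve $\xi_{\mathrm{AF}}=\xi_{\mathrm{DF}}$. On the AF-subregion, monotonicity of $\Gamma_k^{\mathrm{AF}}$ again forces the candidate to $(P_k^{\mathrm{U,max}},P_k^{\mathrm{R,max}})$ with SNR $\Gamma_k^{\mathrm{AF}}(P_k^{\mathrm{U,max}},P_k^{\mathrm{R,max}})$, while on the DF-subregion the candidate is the solution $(\tilde{p}_k^{\mathrm{U}},\tilde{p}_k^{\mathrm{R}})$ of \eqref{subP3}. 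The overall optimum is whichever achieves the larger $\Gamma_k$, which is precisely what the ratio test $\Gamma_k^{\mathrm{AF}}(P_k^{\mathrm{U,max}},P_k^{\mathrm{R,max}})/\Gamma_k^{\mathrm{DF}}(\tilde{p}_k^{\mathrm{U}},\tilde{p}_k^{\mathrm{R}})\leq 1$ in the statement decides.

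The hard part will be the DF-subregion. Although $\Gamma_k^{\mathrm{DF}}$ is itself monotone in both powers and would be maximized at the max corner if unconstrained, that corner lies in the AF-subregion by hypothesis, so the DF-optimum must sit on the switching boundary $\xi_{\mathrm{AF}}=\xi_{\mathrm{DF}}$. One therefore has to characterize this boundary in closed form, verify well-posedness of \eqref{subP3} on the resulting boundary-constrained set, and confirm that the claimed $(\tilde{p}_k^{\mathrm{U}},\tilde{p}_k^{\mathrm{R}})$ is indeed its maximizer. Once both candidates are available, the comparison is immediate from the per-user monotonicity reduction established at the outset.
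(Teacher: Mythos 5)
Your proposal follows essentially the same route as the paper's proof: split on whether $\tilde{C}<1$ (in which case Lemma~\ref{lemma2} forces AF over the whole power rectangle and monotonicity of $\Gamma_k^{\mathrm{AF}}$ puts the optimum at the max corner) or $\tilde{C}\geq 1$ (in which case the AF candidate at the max corner is compared against the DF candidate obtained by maximizing $\Gamma_k^{\mathrm{DF}}$ over the DF subregion), which is exactly the paper's argument. The boundary characterization you flag as the hard part is handled in the paper simply by imposing the DF-region inequality as constraint (\ref{subP3}a) of problem \eqref{subP3} and solving it numerically, so your plan and the paper's proof coincide in substance.
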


\begin{proof}
    When $\tilde{C} < 1$, based on Lemma~\ref{lemma2}, we always have $\tilde{\mu}_k(p_k^\mathrm{U},p_k^{\mathrm{R}})=1$. Then, according to the monotonicity of the $\Gamma_k^{\mathrm{AF}}$, transmit powers maximizing $\Gamma_k^{\mathrm{AF}}$ should satisfy $(p_k^{\mathrm{U,*}},p_k^{\mathrm{R,*}})= (P_k^{\mathrm{U,max}},P_k^{\mathrm{R,max}})$. On the other hand, if $\tilde{C} \geq 1$, it shows that $\exists$ $0\leq\delta_1 \leq\Delta P_k^{\mathrm{U}}$ and $0\leq\delta_2 \leq \Delta P_k^{\mathrm{R}}$ $s.t.$ $\tilde{\varepsilon}(P_k^{\mathrm{U,max}}-\delta_1,P_k^{\mathrm{R,max}}-\delta_2)=0$. Then, to maximize the achievable rate of user $k$, we need to compare the highest SNR when using AF scheme ($\Gamma_k^{\mathrm{AF,max}}$) and the highest SNR when using DF scheme ($\Gamma_k^{\mathrm{DF,max}}$). Based on Lemma~\ref{lemma2}, we can obtain $\Gamma_k^{\mathrm{AF,max}}$ when $p_k^{\mathrm{U}}$ and $p_k^{\mathrm{R}}$ both achieve the corresponding upper bound of its feasible region. To obtain $\Gamma_k^{\mathrm{DF,max}}$, we can formulate the following optimization problem as

    \begin{subequations}\label{subP3}
    \begin{align} 
       \mathop{\max}_{{p}_k^{\mathrm{U}}, {p}_k^{\mathrm{R}}} \quad & \mathrm{min}\{p_k^{\mathrm{U}}\gamma_k^{\mathrm{UB}}+p_k^{\mathrm{R}}\gamma_k^{\mathrm{RB}},p_k^{\mathrm{U}}\gamma_k^{\mathrm{UR}}\},\tag{\ref{subP3}}\\
         \textrm{s.t.} \qquad 
         &  \frac{C_{\mathrm{th}}^2+C_{\mathrm{th}}}{(C_{\mathrm{th}}+1)p_k^{\mathrm{\mathrm{U}}}\bar{\gamma}_k^{\mathrm{UB}}+p_k^{\mathrm{\mathrm{U}}}\bar{\gamma}_k^{\mathrm{UB}}p_k^{\mathrm{\mathrm{R}}}\bar{\gamma}_k^{\mathrm{RB}}}\geq 1, \\
   \nonumber
   &(\ref{subP2}\mathrm{a})-(\ref{subP2}\mathrm{c}),
    \end{align}
\end{subequations}
which is a convex problem and can be solved using existing toolboxes, such as CVX. Denote the optimal solution to this problem by $(\tilde{p}_k^{\mathrm{U}},\tilde{p}_k^{\mathrm{R}})$ , and we have proved Proposition~\ref{proposition2}
\end{proof}

Returning to problem \eqref{subP}, after the transmit powers are determined, $n$ and ${\mu}_k$ can be obtained by constraint (\ref{subP}a) and \eqref{OP-min Principle}, respectively, and problem \eqref{subP} has also been solved.

\section{Numerical Simulation}
In our simulations, we assume that there are $K=4$ users and the FAS structure parameters are configured as $N_1=N_2=4$ and $W_1=W_2=1$. Besides, we set the channel model parameters as $\sigma_k^2=\sigma^2=-120\mathrm{dBm}$, $B=5\mathrm{MHz}$, and $B\xi = 0.5\mathrm{Mbps}$. Without loss of generality, we always set $N_1=N_2$, $W_1=W_2$, a equal minimum achievable rate requirement for all users, and equal maximum transmit powers for different users and for different FARs, respectively.  We evaluate and compare the sum rate of our proposed algorithm with several benchmark schemes: \textbf{TAS}, which uses TAS at the FARs to receive and transmit the signals; \textbf{Average bandwidth allocation}, which equally allocates the total bandwidth $B$ to every user; \textbf{Random power control}, which randomly sets the transmit powers for users and the FARs, respectively.

Fig.~\ref{UserNumber} illustrates the sum rate of the system versus the number of users. Without loss of generality, we set the average channel gain of user $k+1$ is greater than that of user $k$. On this basis, sum rates of all schemes increase as the number of users increases, while our proposed algorithm always outperforms the other three schemes. In specific, the `TAS' scheme always performs worst, showing the superiority of the introduction of the FAS. Besides, as more users are included in the system, our proposed algorithm gains more rate improvement than the `Average bandwidth allocation' scheme, demonstrating 
the necessity of adaptive bandwidth allocation.

Fig.~\ref{N} shows the system sum rate versus the number of pre-set ports of the FAS. Our proposed algorithm first holds a same performance with `TAS' scheme when $N_1=N_2=1$, since FAS now degenerates into TAS. As the number of ports increases, our proposed algorithm begins to demonstrate its superiority, always outperforming the other three schemes. Moreover, the growth rate of sum rate in our proposed algorithm gradually decreases with the growth of the number of pre-set ports, and when $N_1$ equals $4$, it can have a relatively large gain compared with `TAS' scheme.

Fig.~\ref{Pr} exhibits the sum rate versus the maximum transmit power of the FAR. The `Proposed' scheme maintains the highest sum rate, exhibiting a steady and robust increase as the transmit power rises. The `Random power control scheme', while initially lower than `TAS' scheme at low power, shows a faster growth rate and surpasses `TAS', eventually outperforming `TAS' at higher transmit powers. The `Average bandwidth allocation' scheme falls between the Proposed and other schemes, showing improved sum rates with increasing transmit power but consistently remaining below the `Proposed' scheme. The performance gap between the `Proposed' scheme and the other benchmark schemes at higher power levels emphasizes the capability of the proposed scheme in converting transmit power to higher sum rates.

\section{Conclusion}
In this paper, we have investigated the FAR-assisted wireless communication system with hybrid relaying scheme selection and formulated a sum-rate maximization optimization problem. To determine the relay scheme selection strategy, we have analyzed the OPs using different relaying schemes and have introduced a Gaussion copula-based method to approximate numerical value of OPs, based on which we determine the OP-minimized principle for FAR nodes to choose relaying schemes. To solve the optimization problem, we have obtained the optimal bandwidth allocation and have solved the transmit power problem with a low complexity method. Simulation results have validated the effectiveness of proposed algorithm.
\bibliographystyle{IEEEtran}
\bibliography{MMM}

@ARTICLE{9264694,
  author={Wong, Kai-Kit and Shojaeifard, Arman and Tong, Kin-Fai and Zhang, Yangyang},
  journal={IEEE Transactions on Wireless Communications}, 
  title={Fluid Antenna Systems}, 
  year={2021},
  volume={20},
  number={3},
  pages={1950-1962},
  keywords={Wireless communication;Upper bound;Fluids;Diversity reception;Switches;Probability;Power system reliability;Diversity;fluid antennas;MIMO;multiple antennas;selection combining;outage probability},
  doi={10.1109/TWC.2020.3037595}}

@INPROCEEDINGS{10615841,
  author={Xu, Ruopeng and Chen, Yixuan and Kang, Jiawen and Xu, Minrui and Yang, Zhaohui and Huang, Chongwen and Dusit, Niyato},
  booktitle={2024 IEEE International Conference on Communications Workshops (ICC Workshops)}, 
  title={Fluid Antenna Relay Assisted Communication Systems Through Antenna Location Optimization}, 
  year={2024},
  volume={},
  number={},
  pages={1140-1145},
  keywords={Wireless communication;Fluids;Simulation;Conferences;Bandwidth;Directive antennas;Channel allocation;Fluid antenna system;fluid antenna relay;an-tenna location optimization},
  doi={10.1109/ICCWorkshops59551.2024.10615841}}

@article{aka2025power,
  title={Power Minimization for Half-Duplex Relay in Fluid Antenna System},
  author={Aka, Babatunde Emmanuel and New, Wee Kiat and Leow, Chee Yen and Wong, Kai-Kit and Shin, Hyundong},
  journal={IEEE Wireless Communications Letters},
  year={2025},
  publisher={IEEE}
}

@INPROCEEDINGS{10167904,
  author={Tlebaldiyeva, Leila and Arzykulov, Sultangali and Tsiftsis, Theodoros A. and Nauryzbayev, Galymzhan},
  booktitle={2023 International Balkan Conference on Communications and Networking (BalkanCom)}, 
  title={Full-Duplex Cooperative NOMA-based mmWave Networks with Fluid Antenna System ({FAS}) Receivers}, 
  year={2023},
  volume={},
  number={},
  pages={1-6},
  keywords={NOMA;Fluids;Interference cancellation;Wireless networks;Simulation;Quality of service;Full-duplex system;Fluid antenna system (FAS);full-duplex (FD);millimeter wave (mmWave);non-orthogonal multiple access (NOMA).},
  doi={10.1109/BalkanCom58402.2023.10167904}}

@ARTICLE{10303274,
  author={New, Wee Kiat and Wong, Kai-Kit and Xu, Hao and Tong, Kin-Fai and Chae, Chan-Byoung},
  journal={IEEE Transactions on Wireless Communications}, 
  title={An Information-Theoretic Characterization of {MIMO}-{FAS}: Optimization, Diversity-Multiplexing Tradeoff and q-Outage Capacity}, 
  year={2024},
  volume={23},
  number={6},
  pages={5541-5556},
  keywords={Antennas;MIMO communication;Probability;Power system reliability;Signal to noise ratio;Correlation;Transmitting antennas;6G;diversity and multiplexing tradeoff;fluid antenna system;MIMO;outage capacity},
  doi={10.1109/TWC.2023.3327063}}

@ARTICLE{5397895,
  author={Su, Weifeng and Liu, Xin},
  journal={IEEE Transactions on Communications}, 
  title={On optimum selection relaying protocols in cooperative wireless networks}, 
  year={2010},
  volume={58},
  number={1},
  pages={52-57},
  keywords={Wireless application protocol;Wireless networks;Decoding;Power system relaying;Digital relays;Analytical models;Power system modeling;Communication switching;Cultural differences;Communications Society;Cooperative communications, amplify-and-forwrd (AF) relaying;optimum power allocation, wireless network;selection relaying;decode-and-forward (DF) relaying},
  doi={10.1109/TCOMM.2010.01.060691}}

@ARTICLE{1362898,
  author={Laneman, J.N. and Tse, D.N.C. and Wornell, G.W.},
  journal={IEEE Transactions on Information Theory}, 
  title={Cooperative diversity in wireless networks: Efficient protocols and outage behavior}, 
  year={2004},
  volume={50},
  number={12},
  pages={3062-3080},
  keywords={Wireless networks;Wireless application protocol;Relays;Fading;Decoding;Feedback;Robustness;Signal to noise ratio;Receiving antennas;Antenna arrays;Diversity techniques;fading channels;outage probability;relay channel;user cooperation;wireless networks},
  doi={10.1109/TIT.2004.838089}}

@ARTICLE{6138259,
  author={Liu, Tianxi and Song, Lingyang and Li, Yonghui and Huo, Qiang and Jiao, Bingli},
  journal={IEEE Transactions on Communications}, 
  title={Performance Analysis of Hybrid Relay Selection in Cooperative Wireless Systems}, 
  year={2012},
  volume={60},
  number={3},
  pages={779-788},
  keywords={Signal to noise ratio;Relays;Approximation methods;Fading;Diversity reception;Wireless communication;Protocols;Hybrid relay selection;frame error rate;SNR threshold-based approximation;cooperative communications},
  doi={10.1109/TCOMM.2012.011312.110015}}

@ARTICLE{10678877,
  author={Rostami Ghadi, Farshad and Wong, Kai-Kit and Javier López-Martínez, F. and Chae, Chan-Byoung and Tong, Kin-Fai and Zhang, Yangyang},
  journal={IEEE Transactions on Wireless Communications}, 
  title={A Gaussian Copula Approach to the Performance Analysis of Fluid Antenna Systems}, 
  year={2024},
  volume={23},
  number={11},
  pages={17573-17585},
  keywords={Antennas;Fluids;Fading channels;Correlation;Wireless communication;Delays;Antenna measurements;Fluid antenna system;arbitrary fading;correlation;Gaussian copula;SISO;outage probability},
  doi={10.1109/TWC.2024.3454558}}

@book{10.5555/1952073,
author = {Nelsen, Roger B.},
title = {An Introduction to Copulas},
year = {2010},
isbn = {1441921095},
publisher = {Springer Publishing Company, Incorporated},
abstract = {The study of copulas and their role in statistics is a new but vigorously growing field. In this book the student or practitioner of statistics and probability will find discussions of the fundamental properties of copulas and some of their primary applications. The applications include the study of dependence and measures of association, and the construction of families of bivariate distributions. This book is suitable as a text or for self-study.}
}

@ARTICLE{11193779,
  author={Xu, Ruopeng and Yang, Zhaohui and Zhang, Zhaoyang and Shikh-Bahaei, Mohammad and Huang, Kaibin and Niyato, Dusit},
  journal={IEEE Journal on Selected Areas in Communications}, 
  title={Energy Efficient Fluid Antenna Relay ({FAR})-Assisted Wireless Communications}, 
  year={2025},
  volume={},
  number={},
  pages={1-1},
  keywords={Fading channels;Fluids;Antennas;Relays;Protocols;Optimization;Power control;Energy efficiency;Transmitting antennas;Reconfigurable intelligent surfaces;Fluid antenna system (FAS);fluid antenna relay (FAR);energy efficiency (EE)},
  doi={10.1109/JSAC.2025.3617892}}

@ARTICLE{5545638,
  author={Suraweera, Himal A. and Garg, Hari K. and Nallanathan, A.},
  journal={IEEE Communications Letters}, 
  title={Performance Analysis of Two Hop Amplify-and-Forward Systems with Interference at the Relay}, 
  year={2010},
  volume={14},
  number={8},
  pages={692-694},
  keywords={Performance analysis;Relays;Bit error rate;Signal to noise ratio;AWGN;Channel state information;Interchannel interference;Analytical models;Protocols;Additive white noise;Relay;co-channel interference;outage probability;average bit error rate;two hop networks},
  doi={10.1109/LCOMM.2010.08.100109}}

@ARTICLE{8869705,
  author={Saad, Walid and Bennis, Mehdi and Chen, Mingzhe},
  journal={IEEE Network}, 
  title={A Vision of 6G Wireless Systems: Applications, Trends, Technologies, and Open Research Problems}, 
  year={2020},
  volume={34},
  number={3},
  pages={134-142},
  keywords={6G mobile communication;5G mobile communication;Market research;Wireless communication;Sensors;Wireless sensor networks},
  doi={10.1109/MNET.001.1900287}}

@ARTICLE{11216397,
  author={Xu, Xiaoren and Xu, Hao and Wei, Dongyu and Saad, Walid and Bennis, Mehdi and Chen, Mingzhe},
  journal={IEEE Journal on Selected Areas in Communications}, 
  title={Transformer based Collaborative Reinforcement Learning for Fluid Antenna System (FAS)-enabled 3D UAV Positioning}, 
  year={2025},
  volume={},
  number={},
  pages={1-1},
  keywords={Autonomous aerial vehicles;Antennas;Three-dimensional displays;Trajectory;Accuracy;Sensors;Optimization;Antenna arrays;Transmitting antennas;Reinforcement learning;UAV communications;Fluid antenna system;3D positioning;Multi-agent reinforcement learning;Attention mechanism},
  doi={10.1109/JSAC.2025.3625523}}

\end{document}